\newcommand{\ie}{\emph{i.e.}}
\newcommand{\noi}{\noindent}
\newcommand{\oc}{\gamma}
\newcommand{\mc}[1]{\mathcal{#1}}
\newcommand{\cH}{\mc H}
\newcommand{\cA}{\mc A}
\newcommand{\cR}{\rho}
\begin{document}

\title{Hypergraph covering problems motivated by genome assembly
  questions}

\author{Cedric Chauve\inst{1,2}, Murray Patterson\inst{3}, Ashok Rajaraman\inst{2,4}}

\institute{LaBRI, Universit\'e Bordeaux 1, Bordeaux, France
\and Department of Mathematics, Simon Fraser University, Burnaby (BC), Canada
\email{[cedric.chauve,arajaram]@sfu.ca} 
\and Centrum Wiskunde \& Informatica, Amsterdam, The Netherlands
\email{murray.patterson@cwi.nl}
\and International Graduate Training Center in Mathematical Biology, Pacific Institute for Mathematical Sciences, Vancouver (BC), Canada}

\maketitle

\begin{abstract}
  The Consecutive-Ones Property (C1P) is a classical concept in
  discrete mathematics that has been used in several genomics
  applications, from physical mapping of contemporary genomes to the
  assembly of ancient genomes. A common issue in genome assembly
  concerns repeats, genomic sequences that appear in several locations
  of a genome. Handling repeats leads to a variant of the C1P, the C1P
  with multiplicity (mC1P), that can also be seen as the problem of
  covering edges of hypergraphs by linear and circular walks. In the 
  present work, we describe variants of the mC1P that address
  specific issues of genome assembly, and polynomial time or
  fixed-parameter algorithms to solve them.
\end{abstract}

\begin{bibunit}[plain]
\section{Introduction}\label{sec:introduction}


A binary matrix $M$ satisfies the Consecutive-Ones Property (C1P) if
its columns can be ordered in such a way that, in each row, all 1
entries appear consecutively.
The C1P has been studied in relation to a wide range of problems,
from theoretical computer science \cite{Dom2009} to genome mapping
(see \cite{OuangraouaTC2011,ZhangJZ2012} and references there).
The C1P can be naturally described in terms of covering hypergraph
edges by walks. Assume a binary matrix $M$ is the incidence matrix of
a hypergraph $H$, where columns represent vertices and rows encode
edges; then $M$ is C1P if and only if $H$ can be covered by a path
that contains all vertices and where every edge appears as a
contiguous subpath. Deciding if a binary matrix is C1P can be done in
linear time and space (see \cite{Dom2009} and references there). If a
matrix is not C1P, a natural approach is to remove the smallest number
of rows from this matrix in such a way that the resulting matrix is
C1P. This problem, equivalent to an edge-deletion problem on
hypergraphs that solves the Hamiltonian Path problem, is NP-complete,
although fixed-parameter tractability (FPT) results have recently been
published.


At a high level of abstraction, genome assembly problems can be seen
as graph or hypergraph covering problems: vertices represent small
genomic sequences, edges encode co-localisation information, and one
wishes to cover the hypergraph with a set of linear walks (or circular
walks for genomes with circular chromosomes) that respect
co-localisation information\footnote{Note to reviewers: we provide a
  more detailed description of the link between the assembly
  hypergraph framework and practical assembly problems in the
  appendix.}. Such walks encode the order of elements along
chromosomal segments of the assembled genome. One of the major issues
in genome assembly problems concerns {\em repeats}- genomic
elements that appear, up to limited changes, in several locations in
the genome being assembled. Such repeats are known to confuse
assembly algorithms and to introduce ambiguity in assemblies
\cite{TreangenS2012}.


Modeling repeats in graph theoretical models of genome assembly can be
done by associating to each vertex a {\em multiplicity}: the
multiplicity of a vertex is an upper bound on the number of
occurrences of this vertex in linear/circular walks that cover the
hypergraph, and thus a vertex with a multiplicity greater than $1$ can
traversed several times in these walks (\ie, encodes a repeat as 
defined above). This hypergraph covering problem naturally translates 
into a variant of the C1P, called the C1P with multiplicity (mC1P) 
that received little attention until recently, when it was investigated 
in several recent papers in relation to assembling ancestral genomes 
that describee both hardness and tractability results for decision and 
edge-deletion problems
\cite{BatzoglouI1999,WittlerMPS2011,ChauveMPW2011,ManuchPWCT2012}.


In the present paper, we formalize the previously studied C1P and mC1P
notions in terms of {\em covering of assembly hypergraphs} by linear
and circular walks and edge-deletion problems
(Section~\ref{sec:preliminaries}). Next, we describe new tractability
results for decision and edge-deletion problems
(Section~\ref{sec:results}): we show that deciding if a given assembly
hypergraph admits a covering by linear and circular walks that
respects the multiplicity of all vertices is FPT and we describe
polynomial time algorithms for decision and edge-deletion problems for
families of assembly hypergraphs which encode information allowing us
to clear ambiguities due to repeats.  We conclude with several open
questions (Section~\ref{sec:conclusion}).

\section{Preliminaries}\label{sec:preliminaries}


\subsection{Notation and terminology}\label{sec:terminology}

\begin{definition}\label{def:hypergraph}
  An {\em assembly hypergraph} is a quadruple $(H,w,c,o)$ where
  $H=(V,E)$ is a hypergraph and $w,\ c,\ o$ are three mappings such
  that $w:\ E \rightarrow \mathbb{R}$, $c:\ V \rightarrow \mathbb{N}$,
  $o:\ E \rightarrow V^*$ where $o(\{v_1,\dots,v_k\})$ is either a
  sequence on the alphabet $\{v_1,\dots,v_k\}$ where each element
  appears at least once, or $\lambda$ (the empty sequence).
\end{definition}

From now, we consider that $|V|=n$, $|E|=m$, $s=\sum_{e\in E}|e|$,
$\Delta=\max_{e\in E}|e|$, $\delta=\max_{v\in V}|\{e\in E\ |\ v\in e\}|$,
$\oc=\max_{v\in V}c(v)$. A vertex $v$ such that $c(v)>1$ is called a
{\em repeat}; $V_R$ is the set of repeats and $\cR=|V_R|$. Edges
s.t. $|e|=2$ are called {\it adjacencies}; from now, without loss of
generality, we assume that $o(e)=\lambda$ if $e$ is an adjacency.
Edges s.t. $|e|>2$ (resp. $|e=3|$) are called {\it intervals}
(resp. {\em triples}). We denote the set of adjacencies (resp. 
weights of adjacencies) by $E_A$ (resp. $w_A$)  and the
set of intervals (resp. weights of intervals) by $E_I$ (resp. $w_I$) . 
An interval is {\it ordered} if $o(e) \neq \lambda$; an assembly graph 
with no ordered interval is {\em unordered}. From now, unless 
explicitly specified, our assembly hypergraphs will be unordered 
and unweighted. We call $c(v)$ the {\em multiplicity} of $v$.

\begin{definition}\label{def:adjacencygraph} 
  An assembly hypergraph with no interval is an {\it adjacency
    graph}. Given an assembly hypergraph ${\cal H}=(H=(V,E),w,c,o)$,
  we denote its {\em induced adjacency graph} by 
  ${\cal H}_A=(H_A=(V,E_A),w_A,c,o_A)$\footnote{Note that $o_A(e)=\lambda$
    for every $e\in E_A$, as adjacencies are unordered.}.
\end{definition}

\begin{definition}\label{def:compatibility}
  Let $(H=(V,E),w,c,o)$ be an assembly hypergraph and $P$ (resp. $C$)
  a linear (resp. circular) sequence on the alphabet $V$. An unordered
  interval $e$ is {\em compatible} with $P$ (resp. $C$) if there is a
  contiguous subsequence of $P$ (resp. $C$) whose content is equal to
  $e$. An ordered interval $e$ is compatible with $P$ (resp. $C$) if
  there exists a contiguous subsequence of $P$ (resp. $C$) equal to
  $o(e)$ or its mirror.
\end{definition}

\begin{definition}\label{def:assembly}
  An assembly hypergraph $(H,w,c,o)$ admits a {\em linear assembly}
  (resp. {\em mixed assembly}) if there exists a set ${\cal A}$ of
  linear sequences (resp. linear and/or circular sequences) on $V$
  such that every edge $e \in E$ is compatible with at least one
  sequence of ${\cal A}$, and every vertex $v$ appears at most $c(v)$
  times in ${\cal A}$. The weight of an assembly is $\sum_{e\in
    E}w(e)$.
\end{definition}

An assembly as defined above can naturally be seen as a set of walks
(some possibly closed in mixed assemblies) on $H$ such that every edge
of $E$ is traversed by a contiguous subwalk. In the following, we
consider two kinds of algorithmic problems that we investigate for
different families of assembly hypergraphs and genome models, a
decision problem and an edge-deletion problem.
\begin{itemize}
\item The {\em Assembly Decision Problem}: Given an assembly
  hypergraph ${\cal H}=(H,w,$ $c,o)$ and a genome model (linear or
  mixed), does there exist an assembly of ${\cal H}$ in this model ?
\item The {\em Assembly Maximum Edge Compatibility Problem}: Given an
  assembly hypergraph ${\cal H}=(H=(V,E),w,c,o)$ and a genome model,
  compute a maximum weight subset $E'$ of $E$ such that the assembly
  hypergraph ${\cal H}'=(H'=(V,E'),\{w(e)\ |\ e\in E'\},c,\{o(e)\ |\
  e\in E'\})$ admits an assembly in this model.
\end{itemize}


\begin{definition}\label{def:repeatcluster}
  Let $(H=(V,E),w,c,o)$ be an assembly hypergraph. A {\em maximal
    repeat cluster} is a connected component of the hypergraph whose
  vertex set is $V_R$ and edge set is $\{e \cap V_R\ |\ e \in E\}$.
\end{definition}

As outlined in the introduction, vertices in an assembly hypergraph
represent genomic elements, each with an associated copy number
$c(v)$, while edges and their order (for intervals) encode
hypothetical co-localisation information, each with an associated
weight.  Linear and/or
circular sequences of vertices defining an assembly represent the
order of these genomic elements along chromosomal segments, the
circular ones representing circular chromosomes. A maximal repeat
cluster encodes a group of elements that are believed to appear in
several locations of the genome to assemble, although different
occurrences might differ in terms of content and/or order (see
\cite{QS2008} for example). Such repeated structures cause ambiguity
in genome assemblies based solely on adjacencies; for example, if
$V=\{a,b,c,d,e\}$, with $c(a)=c(b)=c(d)=c(e)=1$ and $c(c)=2$, and
$E=\{\{a,c\},\{b,c\},\{d,c\},\{e,c\}\}$, then there are essentially
three possible linear assemblies ($\{a.c.b,d.c.e\}, \{a.c.d,b.c.e\},
\{a.c.e,b.c.d\}\}$), while adding the ordered interval $\{a.c.d\}$
leads to a single possible assembly.

\subsection{Existing results}\label{sec:existing}

When no repeats are allowed ($\oc=1$), the Assembly Decision Problem
in the linear genome model is equivalent to asking if a binary matrix
has the C1P, which can be solved in $O(n+m+s)$ time and space. The set
of all linear assemblies can be encoded into a compact data structure,
the {\em PQ-tree}. In the mixed genome model, the problem can also be
solved in linear time, as it reduces to testing the circular C1P for
every connected component of the overlap graph of the matrix. The {\em
  PC-tree}, a slightly modified PQ-tree, can be used to encode all
mixed genome assemblies. We summarize some of these results in the
following theorem and refer to \cite{Dom2009} for a survey on these
questions.

\begin{theorem}\label{thm:cis1}
  The Assembly Decision Problem can be solved in $O(n+m+s)$ time and
  space when $\gamma=1$, in the linear and mixed genome models.
\end{theorem}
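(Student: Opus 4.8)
The plan is to reduce the Assembly Decision Problem in both genome models, when $\gamma=1$, to known results about the Consecutive-Ones Property and its circular variant. First I would make explicit the dictionary between assembly hypergraphs with $\gamma=1$ and binary matrices: given $(H=(V,E),w,c,o)$ with $c(v)=1$ for all $v$, let $M$ be the $m\times n$ incidence matrix whose rows are the edges of $E$ (a $1$ in column $v$ of the row for $e$ iff $v\in e$). Since the hypergraph is unordered, the order maps $o$ play no role, and since $\gamma=1$ every vertex may appear at most once in the assembly, so a linear assembly is exactly a set of linear sequences on $V$, each using each vertex at most once, together covering every edge as a contiguous block.

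For the linear genome model, the key observation is that $M$ has the C1P if and only if $(H,w,c,o)$ admits a linear assembly. The forward direction is immediate: a single column permutation witnessing the C1P is a single linear sequence on all of $V$ in which every row (edge) is a contiguous block, hence a (one-element) linear assembly. For the converse, I would argue that an assembly consisting of several disjoint linear sequences can be concatenated into one: the sequences use pairwise disjoint vertex sets (since $\gamma=1$ forbids repeated occurrences and each vertex must appear to cover any edge containing it — or if some vertex appears in no edge, place it arbitrarily), and concatenating them in any order keeps every edge contiguous, because no edge straddles two sequences (an edge must be contiguous within a single sequence of the assembly). This gives a single permutation of $V$ realizing the C1P. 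Then invoke the linear-time C1P recognition algorithm (the PQ-tree construction) cited as running in $O(n+m+s)$ time and space, which settles the linear case.

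For the mixed genome model, I would similarly identify admitting a mixed assembly with a circular-C1P-type condition, but now concatenation into a single walk is not possible in general (a circular sequence cannot be merged with another sequence without breaking contiguity of the edges wrapping around). Instead, the standard reduction is via the overlap (or "non-nested intersection") graph of the rows of $M$: the connected components of this graph partition the edge set, and one shows that $(H,w,c,o)$ admits a mixed assembly iff, for every connected component, the submatrix restricted to that component's rows and the vertices they cover has the circular C1P — each such component being realized by one linear or one circular sequence, and components on disjoint vertex sets combined freely. Circular C1P of a matrix reduces to ordinary C1P of a related matrix (adding a column of all ones, or equivalently working with the PC-tree), so it is again testable in linear time; summing over components and accounting for the overlap-graph construction stays within $O(n+m+s)$.

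The main obstacle I anticipate is the mixed-model case: one must carefully justify that the decomposition into overlap-graph components is both necessary (distinct components can be placed on separate, vertex-disjoint walks — needing the $\gamma=1$ hypothesis so that a vertex shared between two rows forces them into the same component and hence the same walk) and sufficient (a component's submatrix having circular C1P does yield a single valid linear or circular walk for all its edges, and these walks can be assembled without conflict). The edge cases — vertices appearing in no edge, edges of size $\le 2$ where "contiguous subsequence" degenerates, and ensuring weights are irrelevant to the decision version — are routine but should be mentioned. Everything else is a direct appeal to the cited linear-time C1P and circular-C1P machinery and the PQ-/PC-tree constructions.
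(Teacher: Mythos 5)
Your proposal is correct and follows essentially the same route as the paper, which states this theorem only as a summary of known results: the linear case is the equivalence with the C1P tested via the linear-time PQ-tree algorithm, and the mixed case reduces to testing the circular C1P (PC-tree) on each connected component of the overlap graph of the rows, exactly as you describe. The extra details you supply (the concatenation argument for the linear case and the component-wise necessity/sufficiency discussion) are sound elaborations of what the paper delegates to the cited survey.
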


In the linear genome model, the Assembly Maximum Edge Compatibility
Problem is hard for adjacency graphs -- it solves the problem of
computing a set of paths that cover a maximum number of edges of the
graph -- but FPT results have recently appeared
\cite{DomGN2010,ZhangJZ2012}. Tractability results are less general
when repeats are allowed, as shown below.

\begin{theorem}\label{thm:ADadjacencies1}{\em \cite{WittlerMPS2011}}
  (1) The Assembly Decision Problem can be solved in time and space
  $O(n+m+s)$ for adjacency graphs ($\Delta=2$) in the linear and mixed
  genome models.  (2) In both genome models, the Assembly Decision
  Problem is NP-hard if $\Delta\geq3$ and $\oc\geq2$.
\end{theorem}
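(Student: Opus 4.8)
The plan is two-fold. For part~(1) I would replace the Assembly Decision Problem on an adjacency graph by an equivalent, purely combinatorial condition on vertex degrees (plus one extra global condition in the linear model), and then check that condition by a single scan of the graph. For part~(2) I would give a polynomial-time reduction from a classical NP-complete problem, using size-$3$ edges to encode local consistency requirements and multiplicity-$2$ vertices to create the ambiguity that makes the problem hard.

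\textbf{Part~(1).} Since $\Delta=2$, every edge is an unordered adjacency, so by Definition~\ref{def:compatibility} an assembly is a family of sequences over $V$ that, for every $e=\{u,v\}\in E$, places some occurrence of $u$ next to some occurrence of $v$. Writing $d(v)=|\{e\in E\ |\ v\in e\}|$ for the degree of $v$, I would first prove the necessary condition $c(v)\ge\lceil d(v)/2\rceil$ for every $v$, valid in both models: in any assembly each occurrence of $v$ has at most two distinct neighbours (at most one if it is the first or last element of its sequence), hence certifies at most two (resp.\ one) of the $d(v)$ adjacencies at $v$, and counting edge-ends at $v$ gives the bound. I claim this is also \emph{sufficient} in the mixed model, and sufficient in the linear model together with the extra condition: every connected component of $H$ that contains an edge and all of whose vertices have even degree contains a vertex $v$ with $2c(v)>d(v)$. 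For sufficiency I would argue componentwise using edge-disjoint trails of $H$: a component with $2k>0$ odd-degree vertices decomposes into $k$ trails (add a perfect matching on the odd-degree vertices, take an Euler tour of the resulting connected all-even multigraph, and cut it at the matching edges), using exactly $\lceil d(v)/2\rceil$ occurrences of each $v$; an Eulerian component gives a single closed trail (using $d(v)/2$ occurrences of each $v$) in the mixed model, and in the linear model the same Euler tour written as a linear sequence starting and ending at a vertex $v$ with $2c(v)>d(v)$ costs only one extra occurrence of that vertex. For necessity of the extra linear-model condition, if all vertices of such a component $K$ had $2c(v)=d(v)$, then in any assembly every occurrence of a vertex of $K$ must be interior and certify exactly two of its incident edges, so both of its sequence-neighbours lie in $K$; hence a maximal block of $K$-vertices inside a sequence is the whole sequence, contradicting that its two ends are interior occurrences, so no edge of $K$ could be covered. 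Finally, the degrees, the connected components, and which of them are Eulerian are all computed by one scan of the incidences plus a graph traversal, in $O(n+m+s)$ time and space, which is the claimed bound.

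\textbf{Part~(2).} For the lower bound I would reduce from a classical NP-complete problem of ``local consistency'' type --- for instance $3$-SAT (or a bounded-occurrence restriction of it, or a set-splitting / $3$-uniform hypergraph colouring problem) --- producing an unordered assembly hypergraph with $\Delta=3$ and $\oc=2$. At a high level: each variable is encoded by a multiplicity-$2$ vertex whose two occurrences, constrained by auxiliary multiplicity-$1$ vertices and triples, are forced into one of two mutually exclusive local configurations, thereby realising a Boolean value; each clause is encoded by a small triple gadget that admits a locally consistent realisation exactly when one of its literals is set to true; and a few connector edges glue the gadgets so that a global assembly exists iff the formula is satisfiable. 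To obtain the statement for both genome models simultaneously I would include in the construction a gadget that makes any circular walk impossible (e.g.\ forcing every covering sequence to have a degree-$1$ endpoint, or an asymmetry that cannot be closed), so that on the constructed instances the linear and mixed versions coincide.

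\textbf{Main obstacle.} Part~(1) is routine once the degree characterisation is guessed; the only subtle points are the extra Eulerian-component condition in the linear model and checking that the ``extra'' adjacencies permitted by Definition~\ref{def:compatibility} (sequences need not be walks of $H$) do not break the necessity arguments. The real difficulty is part~(2): the two occurrences of a multiplicity-$2$ vertex are interchangeable, so a naive variable gadget expresses only ``this vertex occurs twice'' rather than a clean Boolean choice, and the gadget must break this symmetry using triples alone while keeping every multiplicity at most $2$ and every edge of size at most $3$; simultaneously one must establish soundness (no unintended assembly exists) and that the instances behave identically in the linear and mixed models. Getting those gadgets and their correctness proof exactly right is the main hurdle.
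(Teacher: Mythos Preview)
Your Part~(1) is correct and is exactly the argument the paper has in mind. The paper does not prove this theorem itself (it is quoted from the cited reference) but records the principle right after the statement: an adjacency graph admits an assembly iff every vertex satisfies $\deg(v)\le 2c(v)$, together with a per-component slack condition in the linear model. Your degree bound plus Euler-trail decomposition realises precisely this characterisation. Your formulation of the extra linear-model condition (only \emph{Eulerian} components with at least one edge must contain a vertex with $2c(v)>d(v)$) is equivalent to asking it of every nontrivial component, because once $d(v)\le 2c(v)$ holds, any odd-degree vertex automatically has $d(v)<2c(v)$; the inequality in the paper's one-line summary is evidently a misprint, and your version is the right one. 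The necessity argument you sketch for that extra condition (tight Eulerian component $\Rightarrow$ every occurrence is interior with both neighbours in the component $\Rightarrow$ impossible in a linear sequence) is sound.

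For Part~(2) the paper provides nothing beyond the citation, so there is no in-paper proof to compare against. Your outline --- encode a SAT-type instance using multiplicity-$2$ variable vertices and size-$3$ clause gadgets, and add a gadget that rules out circular walks so the two genome models coincide --- is the expected shape of such a reduction and is plausible. The genuine gap, which you identify yourself, is that the gadgets are not actually given: the hard work is designing variable and clause gadgets in which the two interchangeable copies of a repeat encode a clean Boolean choice and no unintended assembly slips through, all while keeping $\Delta=3$ and $\gamma=2$. Until that construction and its soundness proof are written out, Part~(2) is a plan rather than a proof.
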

  
The principle of the proof for (1) is that an adjacency graph admits a
valid assembly if and only if every vertex has at most $2c(v)$
neighbours and, in the linear model, if every connected component $C$
satisfies $\sum_{v\in C} \deg(v)-2c(v)>0$. This result, combined with
the use of PQ-trees on the assembly hypergraph without its repeats, can
be extended slightly in the linear genome model.

\begin{theorem}\label{thm:ADadjacencies2}{\em \cite{ChauveMPW2011}}
  The Assembly Decision Problem can be solved in polynomial time and
  space in the linear genome model for unordered assembly hypergraphs
  where, for every edge $e$ containing a repeat, either $e$ is an
  adjacency or $e$ is an interval that contains a single repeat $r$
  and there exists an edge $e'=e\setminus\{r\}$.
\end{theorem}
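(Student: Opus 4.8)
\medskip
\noindent\textbf{Proof proposal.} The plan is to decouple the problem into a \emph{repeat-free} part, solved by the PQ-tree machinery of Theorem~\ref{thm:cis1}, and a \emph{repeat-placement} part, solved by an adjacency-graph argument in the spirit of Theorem~\ref{thm:ADadjacencies1}. First I would form the induced sub-hypergraph $H'$ on the non-repeat vertices $V\setminus V_R$ with edge set $E_0=\{e\in E\ :\ e\cap V_R=\emptyset\}$; note that every edge $e'=e\setminus\{r\}$ coming from a special interval lies in $E_0$, as do all adjacencies between non-repeats, so $H'$ has all multiplicities equal to $1$. Any linear assembly of $\mathcal{H}$ restricts to one of $H'$, so if $H'$ is not C1P we output NO; otherwise Theorem~\ref{thm:cis1} produces, in linear time, a PQ-tree (forest) $T$ whose frontiers are exactly the linear assemblies of $H'$.

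The second step is to understand how repeats must be inserted into a frontier $\mathcal{A}'$ of $T$ to cover the remaining edges. A repeat-containing edge is either (i) an adjacency $\{r,v\}$ with $v\notin V_R$, which forces a copy of $r$ immediately next to the unique occurrence of $v$ in $\mathcal{A}'$; (ii) an adjacency $\{r,r'\}$ with both endpoints repeats, which can always be covered by a dedicated two-vertex segment $r\,r'$, spending one unit of $c(r)$ and one of $c(r')$; or (iii) a special interval $e=e'\cup\{r\}$: since $e'$ is an edge it occupies one contiguous block of $\mathcal{A}'$, and requiring a contiguous subsequence with content exactly $e$ forces $r$ to sit at one of the two ends of that block. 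Hence every repeat $r$ comes equipped with a list of required \emph{attachment points} --- occurrences of non-repeat vertices, ends of $e'$-blocks (equivalently, boundary leaves of the corresponding node of $T$), and free ends of fresh segments --- and the question reduces to whether $c(r)$ copies of each $r$ can be routed through these attachment points simultaneously.

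I would then phrase this as an assembly problem on an auxiliary adjacency graph $G$: its vertices are the repeats, with multiplicity $c(r)$, together with one vertex per distinct attachment point carrying a capacity that records how many of its ends can be exposed at a segment boundary across the admissible arrangements of $T$; its edges record the required incidences (including the repeat--repeat adjacencies). By construction $G$ decomposes along the maximal repeat clusters of $\mathcal{H}$ (Definition~\ref{def:repeatcluster}), and an assembly of $\mathcal{H}$ exists iff the assembly problem on $G$ is feasible for some admissible arrangement of $T$, which is decided by the degree-versus-multiplicity and connected-component criterion underlying Theorem~\ref{thm:ADadjacencies1}. Since $T$, $G$, and all these quantities have polynomial size, and only polynomially many arrangements of $T$ (those freeing a prescribed set of block ends) need to be inspected, the whole procedure runs in polynomial time and space.

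I expect the main obstacle to be Steps~2--3: pinning down exactly which sets of block ends can be made \emph{simultaneously} available over the frontiers of $T$, and proving that the induced constraints are still captured by a purely local criterion (per repeat cluster, per node of $T$). The delicate points are the competition of several repeats for the two ends of a single $e'$-block, the interaction between reversing a Q-node or permuting a P-node and exposing that block's boundary while keeping every other edge contiguous, and the bookkeeping of multiplicities when one copy of $r$ must at once witness an adjacency $\{r,v\}$ and an interval $e'\cup\{r\}$ with $v$ on the boundary of $e'$. Showing that all of this is resolvable without backtracking, and bounding the number of arrangements of $T$ to check, is the heart of the argument.
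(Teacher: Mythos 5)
Your high-level strategy is the same one the paper indicates: this theorem is quoted from \cite{ChauveMPW2011}, and the paper's only comment on its proof is that the degree-versus-multiplicity criterion for adjacency graphs (Theorem~\ref{thm:ADadjacencies1}(1)) is ``combined with the use of PQ-trees on the assembly hypergraph without its repeats.'' Your decomposition into a repeat-free PQ-tree stage and a repeat-placement stage, and your observation that the hypothesis $e'=e\setminus\{r\}\in E$ pins each single-repeat interval's copy of $r$ to an end of the contiguous block realizing $e'$, are exactly in that spirit and are correct as far as they go.

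However, the proposal is not a proof: the step you yourself flag as ``the heart of the argument'' is precisely the part that is missing, and it does not follow from anything you have written. Two concrete gaps. First, you assert that the insertion constraints can be encoded as an auxiliary adjacency graph $G$ whose attachment-point vertices carry a ``capacity that records how many of its ends can be exposed,'' and that feasibility of $G$ is decided by the local degree/component criterion. But which block ends can be exposed \emph{simultaneously} is a global property of the chosen frontier of $T$ (reversing a Q-node or permuting a P-node to expose one $e'$-block's boundary can bury another's), so you have not shown that the constraint set factors into independent per-vertex capacities at all; without that, the reduction to Theorem~\ref{thm:ADadjacencies1} is unjustified. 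Second, the claim that ``only polynomially many arrangements of $T$... need to be inspected'' is unsupported: a PQ-tree generically has exponentially many frontiers, and you give no argument that the relevant equivalence classes of arrangements (distinguished by which attachment points they expose) are polynomially many, nor a dynamic program over $T$ that avoids enumerating them. There is also a smaller unaddressed issue in the bookkeeping of $c(r)$: a single copy of $r$ placed at the end of an $e'$-block may or may not be allowed to simultaneously satisfy an adjacency $\{r,v\}$ and serve as the junction to another block, and your routing formulation does not specify how such sharing is counted against the multiplicity. Until these points are resolved, the argument establishes only that the natural approach can be set up, not that the problem is polynomial.
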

  

Finally, to the best of our knowledge, the following is the 
only tractability result for edge-deletion problems when repeats are
allowed, limited to adjacency graphs and the mixed genome model.

\begin{theorem}\label{thm:AMEDadjacencies1}{\em \cite{ManuchPWCT2012}}
  (1) The Assembly Maximum Edge Compatibility Problem can be solved in
  polynomial time and space in the mixed genome model for adjacency
  graphs ($\Delta=2$). (2) The Assembly Maximum Edge Compatibility
  Problem is NP-hard in the mixed genome model if $\Delta\geq3$, even if 
  \(\oc=1\).
\end{theorem}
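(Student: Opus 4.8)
For part~(1), the plan is to reduce the problem to a maximum-weight degree-constrained subgraph computation. First I would make explicit the characterization underlying Theorem~\ref{thm:ADadjacencies1}(1): in the mixed genome model an adjacency graph $(V,E_A)$ admits an assembly if and only if $\deg_{E_A}(v)\le 2c(v)$ for every $v\in V$; the extra connected-component inequality mentioned after Theorem~\ref{thm:ADadjacencies1} is needed only in the linear model. Necessity is immediate, since each of the at most $c(v)$ occurrences of $v$ in the covering walks meets at most two edges of its own walk. For sufficiency one invokes the classical trail decomposition of a graph: add a matching between the odd-degree vertices, take an Eulerian circuit of the augmented graph, and delete the added edges; this yields closed walks and open trails in which every vertex $v$ is visited at most $\lceil \deg(v)/2\rceil\le c(v)$ times, hence a valid mixed assembly. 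Consequently a subset $E'\subseteq E_A$ yields an assembly hypergraph admitting a mixed assembly precisely when $\deg_{E'}(v)\le 2c(v)$ for all $v$, i.e.\ when $E'$ is a degree-constrained subgraph with bounds $b(v):=2c(v)$; so the problem becomes: compute a maximum-weight such $E'$. After discarding negative-weight edges (which are never used), this is exactly the maximum-weight simple $b$-matching problem, solvable in polynomial time and space by classical matching theory (Tutte's reduction to maximum-weight matching, or Gabow's blossom algorithm).

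For part~(2), I would give a polynomial-time reduction from an NP-hard problem, the natural candidate being the Hamiltonian Path problem -- equivalently the Assembly Maximum Edge Compatibility Problem in the \emph{linear} model for adjacency graphs, which, as recalled in Section~\ref{sec:existing}, already solves the maximum-edge path-cover problem and is NP-hard. Given a graph $G=(V_G,E_G)$, I would construct an assembly hypergraph $\mathcal{H}$ with $\oc=1$ and every edge of size $2$ or $3$: the edges of $G$ become adjacencies keeping their weights, while a gadget made of auxiliary vertices together with size-$3$ edges (triples), plus a weight threshold $W$, is added so that $\mathcal{H}$ admits a compatible subhypergraph of weight at least $W$ in the mixed model if and only if $G$ has a Hamiltonian path. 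One then checks that $\Delta=3$ and $\oc=1$, and that the construction has polynomial size.

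The technical heart -- and the step I expect to be the main obstacle -- is designing the triple gadget so that the cheap mixed-model solutions are neutralized. Indeed, part~(1) shows that with adjacencies alone the mixed model is tractable precisely because an optimum may split freely into short paths and, above all, into cycles; these are exactly the solutions that must be excluded here, which is impossible with adjacencies only. So the triples must be chosen to be jointly realizable only when the relevant vertices are laid out along a single long linear segment, forbidding closed walks and premature breaks, and in such a way that reaching the threshold $W$ forces that segment to contain all of $V_G$ in an order tracing a Hamiltonian path of $G$. Once the gadget is in place, the two directions should go through routinely: a Hamiltonian path of $G$ yields a realizable subhypergraph of weight $\ge W$, and conversely the $G$-part of any realizable subhypergraph of weight $\ge W$ must be a linear forest spanning $V_G$ and of maximum size, hence a Hamiltonian path.
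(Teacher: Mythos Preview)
The paper does not prove Theorem~\ref{thm:AMEDadjacencies1}; it is quoted in Section~\ref{sec:existing} as an existing result from~\cite{ManuchPWCT2012}, so there is no in-paper proof to compare against directly. Your proposal can nonetheless be assessed on its own terms and against the hints the paper does give.

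For part~(1), your reduction is sound and in fact coincides with what the paper later indicates about the cited algorithm: in the complexity analysis of Theorem~\ref{thm:maxcompatibility1} the authors state that the optimal edge-deletion algorithm of~\cite{ManuchPWCT2012} ``is based on a maximum weight matching algorithm of time complexity $O((n+m)^{3/2})$''. Your characterization (mixed assembly $\Leftrightarrow$ $\deg(v)\le 2c(v)$ for all $v$) is exactly the principle recalled after Theorem~\ref{thm:ADadjacencies1}, and the Euler-trail argument you sketch for sufficiency is correct; just make explicit that the pairing of odd-degree vertices and the Eulerian circuit are taken per connected component. The passage from the characterization to maximum-weight $b$-matching with $b(v)=2c(v)$ is then immediate.

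For part~(2) there is a genuine gap. You correctly identify why adjacencies alone cannot yield hardness in the mixed model---cycles and arbitrary fragmentations come for free, which is precisely what makes part~(1) tractable---and you name Hamiltonian Path as the intended source. But the entire content of the reduction lies in the triple gadget that simultaneously forbids closed walks and premature breaks while keeping $\oc=1$, and you do not construct it; you explicitly label it ``the main obstacle'' and leave it open. As it stands, part~(2) is a plan rather than a proof: nothing you have written would let a reader verify either direction of the reduction, and there is no guarantee that a gadget with all the required properties (size-$3$ edges only, $\oc=1$, polynomial size, tight weight threshold) exists along the lines you suggest. You would need to exhibit the gadget concretely and check both implications before this part can be considered complete.
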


\section{New results}\label{sec:results}

We first show that the Assembly Decision Problem is FPT with respect
to parameters $\Delta,\delta,\oc$ and $\cR$. Then we describe positive
results for the case where the induced adjacency graph ${\cal H}_A$ is
assumed to admit an assembly and specific families of intervals are
added to clear ambiguities caused by repeats. We discuss the practical
implications of our positive results at the end of the section.

\subsection{The Assembly Decision Problem is fixed-parameter tractable}

\begin{theorem}
  The Assembly Decision Problem can be solved in space
  $O(n+m+s+\cR\oc)$ and time
  $O\left(\left(\delta(\Delta+\cR\oc)\right)^{2\cR\oc}\left(
      n+m+s+\cR\oc \right)\right)$ in the linear and mixed genome
  models.
\end{theorem}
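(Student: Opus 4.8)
The plan is to reduce the Assembly Decision Problem to a bounded-size search over the "behaviour at repeats'', and then verify that each guessed behaviour can be completed using the $\gamma=1$ machinery of Theorem~\ref{thm:cis1}. Intuitively, the only obstruction to assembling $\mathcal H$ is what happens around the $\cR$ repeat vertices, each of which may be used up to $\oc$ times; so there are at most $\cR\oc$ ``repeat slots''. First I would fix a candidate assembly $\mathcal A$ and look at it locally: each occurrence of a repeat $r$ in some walk of $\mathcal A$ sees a left neighbour and a right neighbour (possibly a walk endpoint, or a repeat occurrence). The key combinatorial observation is that an edge $e$ containing a repeat is realised by a contiguous subwalk, and this subwalk passes through at most $|e|\le\Delta$ consecutive vertices around a repeat occurrence; moreover each vertex lies in at most $\delta$ edges. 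Hence the ``local picture'' at a given repeat occurrence — which vertices immediately surround it and which edges are being routed through it — is describable by choosing, for each of the $\le\cR\oc$ repeat slots, one of at most $O(\delta(\Delta+\cR\oc))$ possibilities (a neighbour that is either a non-repeat vertex reached through one of the $\le\delta$ incident edges, or one of the other $\le\cR\oc$ repeat slots). This yields the claimed $\left(\delta(\Delta+\cR\oc)\right)^{2\cR\oc}$ branching factor, the exponent $2\cR\oc$ accounting for the two sides of each slot.

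Next, for each such guess I would \emph{contract} the repeats out of the instance: replace each repeat occurrence by a fresh non-repeat vertex, reconnect it to the guessed neighbours, and split every edge $e$ that contains repeats into the (now repeat-free) pieces dictated by the guess — so each such edge becomes a short ordered constraint on the new vertices, or is discarded as infeasible if the guess is inconsistent with $e$ (e.g.\ the chosen contiguous window does not have content $e$). What remains is an assembly hypergraph $\mathcal H'$ with $\gamma=1$: all original non-repeat vertices keep multiplicity $1$, and the $O(\cR\oc)$ introduced vertices are distinct. By Theorem~\ref{thm:cis1}, deciding whether $\mathcal H'$ admits a linear (resp.\ mixed) assembly takes time and space linear in its size, which is $O(n+m+s+\cR\oc)$ since we added $O(\cR\oc)$ vertices and only bounded-size edges. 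One must also check that a valid assembly of some $\mathcal H'$ lifts back to a valid assembly of $\mathcal H$: the multiplicity bound on each repeat is respected because we created exactly $c(r)$ slots for $r$ (or fewer), and the walks of the $\gamma=1$ assembly, after re-identifying the slots of each repeat, give walks of $\mathcal H$ in which every edge is contiguous. Conversely any assembly of $\mathcal H$ induces one of the guesses, so the search is complete. Multiplying the $\left(\delta(\Delta+\cR\oc)\right)^{2\cR\oc}$ guesses by the $O(n+m+s+\cR\oc)$ cost of each C1P test gives the stated running time; the space bound follows since the guesses are explored one at a time and the reduced instance has size $O(n+m+s+\cR\oc)$.

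The step I expect to be the main obstacle is making the ``local picture'' genuinely finite and the reduction genuinely correct in the presence of \emph{interval} edges that contain repeats: an interval $e$ with $|e|\le\Delta$ is compatible with a walk as soon as \emph{some} contiguous window has content exactly $e$, so when $e$ straddles one or several repeat occurrences one must be careful that the guessed neighbour information around those occurrences actually pins down such a window (and does not merely constrain immediate neighbours). The clean way around this is to let each ``guess'' specify, for every repeat slot, not just the two flanking vertices but the whole stretch of up to $\Delta$ vertices on each side that any incident interval could use — this is still $O((\Delta+\cR\oc)^{\Delta})$ choices per slot, but $\Delta$ is a parameter, so it stays within the claimed bound after adjusting constants; then an edge containing repeats is feasible for the guess iff its content matches one of the finitely many windows the guess exposes, and the contraction is well defined. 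A secondary, more routine point is verifying that endpoints of linear walks are handled (a slot may abut a walk end rather than another vertex), which only adds a constant number of extra ``neighbour'' symbols per side. With these two points settled, the correctness of the lift/project correspondence and the arithmetic of the bounds are straightforward.
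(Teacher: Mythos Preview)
Your high-level strategy is exactly the paper's: create $c(r)$ copies of each repeat, guess the two neighbours of each copy from a pool of size $O(\delta(\Delta+\cR\oc))$, build a $\oc=1$ instance $\cH_f$, and invoke Theorem~\ref{thm:cis1}. You also correctly single out the treatment of interval edges containing repeats as the one genuine technical point.

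Where you diverge from the paper is in how that point is resolved. Your fallback is to enlarge each guess to a full $\Delta$-window on either side of every repeat slot, but this puts $\Delta$ into the exponent (roughly $(\Delta+\cR\oc)^{\Delta\cdot\cR\oc}$ branches), so it does \emph{not} stay within the stated bound ``after adjusting constants'': the theorem's exponent is $2\cR\oc$, not $\Delta\cdot\cR\oc$. The paper avoids this by keeping the guess at exactly two neighbours per slot and instead transforming the edges. For each $e\in E$ it first places $e'=e\setminus V_R$ into $E_f$; then, for every non-repeat vertex $v$ that is $f$-adjacent to some repeat copy $r_1$, it follows the unique chain $v.r_1.\dots.r_k.u$ of repeat copies determined by the degree-two $f$-edges until it exits at a non-repeat $u$, and augments every $e'\ni v$ by $\{r_1,\dots,r_k\}$. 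Thus an interval meeting repeats is represented in $\cH_f$ without ever guessing more than two neighbours per slot, and the $2\cR\oc$ exponent survives. Your sketch is enough to establish fixed-parameter tractability in $\Delta,\delta,\oc,\cR$, but to match the stated running time you should use this edge-extension step rather than the window guessing you propose.
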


\begin{proof}
  The principle of the proof is, for the given assembly hypergraph 
  \({\cal H}=\left(H,c\right)\)\footnote{Note that we do not consider $w$ and
    $o$ here as the weight does not impact decision problems and we
    deal with unordered hypergraphs. So, we eliminate both mappings 
    from our notation.}, to build another assembly hypergraph
  $\cH_f=(H_f,c_f)$ such that $c_f(v)=1$ for all
    $v\in V(\cH_f)$, by making $c(r)$ copies of each $r\in V_R$ and
    considering each possible set $f$ of choices of 2 neighbors for
    each of these copies. $\cH_f$ can then be checked for the
    existence of an assembly with Theorem \ref{thm:cis1}. The sets $f$
    of choices are made in such a way that $\cH$ has an assembly if
    and only if, for at least one of these sets $f$ of choices, ${\cal
      H}_f$ has an assembly. Finally, if $\Delta,\delta, \oc$ and
    $\cR$ are fixed, we prove that there is a fixed number of such 
    sets $f$.

  Let $R'(r)=\{r_i:1\leq i\leq c(r)\}$ be the set of copies we shall 
  introduce for each $r\in V_R$ (and $R'=\bigcup_{r\in V_R} R'(r)$), 
  $N(v)$ be the \emph{neighborhood} of $v$ in $H$, that is the set of 
  vertices belonging to edges containing $v$, and

  \begin{equation*}
    N'(r)=\{u\in V\setminus V_R:u\in N(r)\}\cup \bigcup_{p\in (V_R\cap
      N(r))\cup\{r\}} R'(p)
  \end{equation*}

  \noi be the ``new neighborhood'' from which we choose neighbors for
  vertices in $R'(r)$.  We represent each set of possible choices of 2
  neighbors\footnote{We consider only the case of 2 neighbors here for
    expository reasons; the complete proof, including the case of one
    or no neighbor, is similar.}  of each $r_i\in R'(r)$ with a
  mapping $f_r:R'(r)\rightarrow S_r$, 
  where $S_r=\{\{u,v\}:u,v\in N'(r)\}$. Let $f=\bigcup_{r\in V_R} f_r$
  be the collection of these mappings (itself a mapping
  $f:R'\rightarrow S'$ where $S'=\bigcup_{r\in V_R} S'_r$).

  We can now state the full algorithm as follows.
  \begin{enumerate}
    \item For each \(r\in V_{R}\), make $c(r)$ copies of \(r\), which
    defines the set \(R'(r)\). Let \(R'=\bigcup_{r}R'(r)\).
    \item For each \(v\in R'(r)\), choose \(2\) neighbours from \(N'(r)\), 
      thus defining $f_{r}$ for every \(r\in V_{R}\). This also defines 
      $f$ as the collection of mappings $f_{r}$ over all $r\in V_{R}$.
    \item Construct a new assembly hypergraph $\cH_f=(H_f=(V_f,E_f),c_f)$
      with $V_f=(V\setminus V_R)\cup R'$, $c_f(v)=1$ for all $v\in V_{f}$,
      and $E_{f}$ defined as follows: (1) for each $r_i\in R'(r)$, $r\in
      V_R$, $f(r_i)=\{u,v\}$ for some $u,v\in N'(r)$, add $\{r_i,u\}$ and
      $\{r_i,v\}$ to $E_f$ ($f$-edges) and (2) for each $e\in E$, add an
      edge $e'\in E_f$ containing $\{v:v\in e\setminus V_R\}$.
    \item For each $v\in V_f\setminus R'$ adjacent to a vertex of $r_1 \in
      R'$, let $v.r_1.\ \dots\ .r_k.u$ be the unique path in $H_f$ s.t.
      $\{r_1,\dots,r_k\}\subseteq R'$ and $u\in V_f\setminus R'$. Add all
      of $\{r_1,\dots,r_k\}$ to $e'$ for each $e'\in E_f$ such that $v\in
      e'$.
    \item Use Theorem~\ref{thm:cis1} on $\cH_{f}$.  Output Yes and exit if
      $\cH_f$ admits an assembly in the chosen genome model.
    \item Iterate over all possible sets of neighbour choices $f$ in Step
      2.
    \item Output No if no $\cH_{f}$ admits an assembly in the chosen genome 
      model.
  \end{enumerate}


  \paragraph{Algorithm correctness.}
  The premise for the algorithm is the following claim, which we state 
  and prove below.
  \begin{claim} 
    $\cH$ has an assembly if and only if, for some $f$, $\cH_f$ has an
    assembly.
    \label{lem:mapping}
  \end{claim}

  
  First, if $\cH$ has the assembly $\cA$, in $\cA$, we replace each
  occurrence of a vertex $r\in V_R$ by copies $r_i\in R'(r)$ where
  $R'(r) = \{r_i:1\leq i\leq c(r)\}$.  Let this new assembly be called
  $\cA'$.  Each such $r_i$ is adjacent to at most 2 other distinct
  vertices. We consider the mapping $f$ which maps each such $r_i$ to
  its two neighbours in this assembly $\cA'$.  If we can establish
  that the hypergraph obtained from this mapping and the new edges we
  introduce admits $\cA'$ as an assembly, we are done.
  
  To decide if $\cH_f$ has an assembly, we first note that any set of
  covering walks on $\cH_f$ is a set of paths (we cannot visit the
  same vertex twice because $c_f(v)=1$ for all $v\in V_f$).  Since $\cA$
  is a covering walk of $\cH$, by splitting the vertices of $V_R$ into
  distinct copies, we ensure that no vertex of $\cH_f$ is visited
  twice by $\cA'$.  Now, let us look at the set of edges $E_f$.  If all
  of them are covered as contiguous subsequences in $\cA'$, we are
  done.  We show this by the following observations.
  \begin{enumerate}
  \item In $\cA$, every edge $e$ occurs as a contiguous subsequence.
    Let $e'$ be the edge in $\cH_f$ corresponding to $e$. Then, by
    definition of $\cA'$, $e'$ must occur in it as a contiguous
    subsequence.
  \item For each $r_i\in R'(v)$ for some $r\in V_R$, we
    defined $f(r_i)=\{u,v\}$ using the assembly $\cA$.  So, we
    definitely get both adjacencies $\{r_i,u\},\{r_i,v\}$ in $\cA'$.
  \end{enumerate}
  So, $\cA'$ must be an assembly for $\cH_f$, which implies that
  $\cH_f$ has an assembly.
  
  
  Conversely, if the graph $\cH_f$ has an assembly, it contains all
  vertices $V\setminus V_R$, and occurrences of each $r_i \in R'(r)$
  for all repeat vertices $r\in V_R$.  If we remove the subscripts,
  \ie, $r_i$ becomes $r$ for all $i$, we get an assembly $\cA$, which
  we claim is an assembly for $\cH$, as $\cA$ will have the following
  properties.
  \begin{enumerate}
  \item Every vertex $v\in V$ appears at least once, and at most
    $c(v)$ times.
  \item For every edge $e'\in E$ consisting only of vertices in
    $V\setminus V_R$, we get a contiguous occurrence of $e\in E$,
    which is the corresponding edge in $\cH$.
  \item For every edge $e\in E$, such that $r\in e$ for some $r\in
    V_R$, there is an edge $e'\in E_f$ such that $r_i\in R'(r)$ has two
    neighbours and $r_i\in e'$. In this case, we get a contiguous
    occurrence of $e'$ including $r_i$. Removing the subscripts gives
    us a contiguous occurrence of $e$ in the new assembly $\cA$.
  \end{enumerate}
  So, $\cA$ contains occurrences of every edge $e\in E$ in $\cH$ as
  contiguous subsequences, which proves that $\cA$ is an assembly for
  $\cH$.  This proves the claim.

  This proof holds for both genome models as Theorem \ref{thm:cis1}
  considers them both.

  \paragraph{Algorithm complexity.}  The space complexity
  follows obviously from the construction of $\cH_f$. The choice of
  neighbours can be made in at most
  $\binom{\delta\left(\Delta+\cR\oc-1\right)}{2}$ ways for each new
  vertex. So, in total, we get at most
  $\binom{\delta\left(\Delta+\cR\oc-1\right)}{2}^{\cR\oc}$ possible
  mappings $f:R'\rightarrow S'$. The procedure on each $v_{i}$ can be
  done in time $O\left(1\right)$, since we just need to check its
  neighbours, which are at most $2$. Doing so for all vertices in $V_f$
  takes time at most $O\left(n+\cR\oc\right)$. The final step, checking
  for the existence of an assembly for a given $\cH_f$, can be done in
  $O\left(\left(n+\cR\oc\right)+\left(m+2\cR\oc\right)+s\right)$ time,
  since we add at most $2\oc \cR$ new edges, and $\cR\oc$ new
  vertices.  \qed
\end{proof}

\subsection{An edge-deletion algorithm for unordered intervals of size 3}

Now, we assume we are given an assembly hypergraph ${\cal
  H}=\left(H,w,c,o\right)$ whose induced adjacency graph ${\cal H}_A$
is known to have a mixed assembly. To state our result, we extend
slightly the notion of compatibility: an unordered interval $e$ is
said to be {\em compatible} with ${\cal H}_A$ if there exists a walk
in $H_A=(V,E_A)$ whose vertex set is exactly $e$.  We consider 
the interval compatibility problem defined below.

\smallskip\noindent The {\em Assembly Maximum Interval Compatibility
  Problem}: Given an assembly hypergraph ${\cal H}=(H=(V,E),w,c,o)$
such that ${\cal H}_A$ admits a mixed assembly, compute a maximum
weight subset of $E_I$, $S \subseteq E_I$, such that ${\cal H}'=(H'=(V,E'=E_A
\cup S),\{w(e)\ |\ e\in E'\},c,\{o(e)\ |\ e\in E'\})$ admits a mixed
assembly.


\begin{theorem}\label{thm:maxcompatibility1} 
  Let \({\cal H}=\left(H=\left(V,E\right),w,c,o\right)\) be a weighted
  assembly hypergraph such that ${\cal H}_A$ admits a mixed
  genome assembly, and each interval is a triple containing
  at most one repeat and compatible with ${\cal H}_A$. The Assembly
  Maximum Interval Compatibility Problem in the mixed genome model can
  be solved for \({\cal H}\) in linear space and $O((n+m)^{3/2})$
  time.
\end{theorem}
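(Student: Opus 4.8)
The plan is to reduce the Assembly Maximum Interval Compatibility Problem, under these hypotheses, to a maximum-weight matching (or $b$-matching) problem on an auxiliary graph derived from $\mathcal{H}_A$, and then invoke a fast matching algorithm to obtain the $O((n+m)^{3/2})$ bound. The key structural observation I would first establish is a \emph{local certificate} for when a set $S$ of triples can be simultaneously realized: since $\mathcal{H}_A$ already admits a mixed assembly $\mathcal{A}$ (a set of paths and cycles covering $V$ with multiplicities), and each triple $e = \{x,y,r\}$ (with $r$ the possible repeat) is compatible with $\mathcal{H}_A$ — meaning there is a walk in $H_A$ whose vertex set is exactly $e$ — making $e$ compatible with a \emph{new} mixed assembly amounts to forcing the three vertices of $e$ to appear contiguously in some walk. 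For a triple with no repeat, this means the two adjacency-edges of $e$ that lie inside $\mathcal{A}$ must be ``used'' at a single vertex, i.e. the middle vertex of $e$ has both of its $e$-incident adjacencies consecutive in the covering walk. For a triple containing the repeat $r$, the constraint is weaker because $r$ has multiplicity $>1$ and its occurrences can be split, but $x$ and $y$ (multiplicity $1$) still constrain how an occurrence of $r$ sits between them.

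Next I would translate these local constraints into a combinatorial consistency condition: a subset $S \subseteq E_I$ is jointly realizable (keeping $\mathcal{H}_A$'s assembly) if and only if the constraints imposed at each degree-$1$ vertex are pairwise compatible. The crucial point is that at a non-repeat vertex $v$, a triple that forces a particular pair of adjacencies at $v$ to be consecutive is incompatible with a second triple forcing a \emph{different} pair at $v$ (because $v$, having multiplicity $1$, can only be traversed once, hence has at most two incident edges used in the walk). Thus triples ``compete'' for the consecutive-slots at shared non-repeat vertices. I expect this competition graph to have bounded local structure — each vertex offers essentially $O(1)$ usable configurations once $\mathcal{H}_A$'s assembly is fixed up to the PC-tree — so that selecting a maximum-weight set of mutually compatible triples becomes a maximum-weight matching: build a graph $G$ whose nodes are the ``interface slots'' around vertices of $V\setminus V_R$ (plus gadget nodes for the copies of each repeat $r$, of which there are at most $c(r)$), and whose edges are the triples in $E_I$, each joining the two slots it occupies, with weight $w(e)$. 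A maximum-weight matching in $G$ then gives the maximum-weight jointly-realizable $S$; one must also verify that any matching (not just a perfect one) corresponds to a valid choice, which follows because $\mathcal{H}_A$'s assembly already witnesses feasibility when no triple is selected, and adding a matched triple only locally refines that assembly.

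The main obstacle, and where I would spend the most care, is proving that this local-to-global reduction is \emph{exact}: that mutual pairwise compatibility of the selected triples (as captured by the matching) really does yield a single global mixed assembly of $\mathcal{H}' = (V, E_A \cup S)$, with no long-range obstruction. This is where the ``mixed'' model and the PC-tree encoding of all assemblies of $\mathcal{H}_A$ (from Theorem~\ref{thm:cis1}) enter: I would argue that forcing a contiguity constraint of the form ``$\{x,y,r\}$ consecutive'' corresponds to a contraction/constraint operation on the PC-tree that either succeeds (and can always be applied independently of disjoint such operations) or fails locally, and that the hypothesis ``each interval is a triple containing at most one repeat and compatible with $\mathcal{H}_A$'' is exactly what guarantees failures are detectable locally. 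Once exactness is in hand, the complexity is immediate: $G$ has $O(n)$ nodes and $O(m)$ edges, it is constructed in linear time and space from $\mathcal{H}_A$ and $E_I$, and maximum-weight matching on such a sparse graph runs in $O((n+m)^{3/2})$ time, giving the stated bound. Finally I would note the result holds verbatim in the mixed model because cycles and paths are treated uniformly by the PC-tree, and the repeat gadgets faithfully model the split of a multiplicity-$c(r)$ vertex into independent occurrences, as in the FPT construction of the previous subsection.
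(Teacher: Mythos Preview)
Your proposal has the right endpoint---the algorithm is ultimately matching-based, and the $O((n+m)^{3/2})$ bound indeed comes from a matching computation---but the route you sketch is both more complicated than necessary and contains a real gap.

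The paper's proof begins with a case analysis you miss entirely: two of the three kinds of triples are \emph{forced}. If $e=\{v_0,v_1,v_2\}$ has no repeat, then compatibility with $\mathcal{H}_A$ means the three vertices form a path in $H_A$, and since each has multiplicity~$1$, every mixed assembly of $\mathcal{H}_A$ already realises $e$ contiguously; so $e\in S$ automatically. Similarly, if $e=\{v_0,v_1,r\}$ with $c(r)>1$ and $\{v_0,v_1\}\in E_A$, then the walk $v_0.v_1.r$ (or its mirror) must appear in every assembly because $v_1$ has multiplicity~$1$; again $e\in S$. Only the triples $e=\{v_0,v_1,r\}$ with $\{v_0,v_1\}\notin E_A$---so $r$ is adjacent to both $v_0$ and $v_1$ and must sit in the middle---require any optimisation.

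For this residual class the paper does not build your ``interface-slot'' graph. Instead it contracts each such triple to a new adjacency $a_e=\{v_0,v_1\}$ of weight $w(e)$, deletes the adjacencies $\{v_0,r\}$ and $\{v_1,r\}$, gives the remaining original adjacencies a weight larger than $\sum_e w(a_e)$ so they are never discarded, and then invokes the \emph{existing} adjacency-graph edge-deletion algorithm of Ma\v{n}uch et al.\ (Theorem~\ref{thm:AMEDadjacencies1}). The correctness argument is that choosing to realise $e$ is exactly choosing to route one occurrence of $r$ as $v_0.r.v_1$, which is encoded by the contracted adjacency; the large weights on original adjacencies guarantee the optimum keeps all of $E_A$. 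This is a straight reduction to a previously solved problem, not a fresh matching construction.

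Your proposed exactness argument via PC-trees is the concrete gap: PC-trees encode the assemblies of $\mathcal{H}_A$ only when $\gamma=1$ (Theorem~\ref{thm:cis1}), but here repeats are the whole point, and there is no PC-tree for $\mathcal{H}_A$ in general. The ``contraction/constraint operation on the PC-tree that either succeeds or fails locally'' is therefore not available. Likewise, your claim that triples compete only at non-repeat vertices and that each vertex offers $O(1)$ configurations is not substantiated; in the hard case the competition is really for the $c(r)$ copies of each repeat $r$ and for the degree budget at non-repeat neighbours simultaneously, which is precisely what the adjacency-graph edge-deletion algorithm (itself matching-based) already handles once you have done the contraction.
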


\begin{proof}
  
  The proof proceeds in two stages: we first show that repeat-free
  triples, as well as triples whose non-repeat vertices form an
  adjacency, must always be included in a maximum weight compatible
  set of triples. Then, we present an algorithm which uses the
  adjacency compatibility algorithm of Ma{\v{n}}uch et
  al.~\cite{ManuchPWCT2012} to decide which of the remaining triples
  to include. From now, we denote by $S$ a maximum weight subset of
  $E_I$ such that $(H'=(V,E_A \cup S),\{w(e)\ |\ e\in E_A \cup
  S\},c,\{o(e)\ |\ e\in E_A \cup S\})$ admits a mixed assembly.


  \begin{claim}
    If a triple $e\in E_I$ satisfies
    $e=\left\{v_{0},v_{1},v_{2}\right\}$, with
    $c\left(v_{0}\right) = c\left(v_{1}\right) =
    c\left(v_{2}\right) = 1$, then $e \in S$.
  \end{claim}
	
  As \(e\) is assumed to be compatible with ${\cal H}_A$ by
  hypothesis, there is a walk on these three vertices in \(H_A\). As a
  walk on three non-repeat vertices is a path, w.l.o.g we assume that
  the adjacencies in the path are \(\left\{v_{0},v_{1}\right\}\) and
  \(\left\{v_{1},v_{2}\right\}\) (the argument holds by symmetry for
  the other cases). Then, in any mixed assembly of \({\cal H}_A\), in
  order to contain both adjacencies, and to make sure that \(v_{1}\)
  appears exactly once in the assembly, the assembly must contain
  \(e\), in the order \(v_{0}.v_{1}.v_{2}\). So, it must be included 
  in $S$, as $S$ is a maximum weight subset of $E_{I}$.


  \begin{claim}
    If a triple $e\in E_I$ satisfies $e=\left\{v_{0},v_{1},r\right\}$,
    with $c\left(v_{0}\right) = c\left(v_{1}\right) = 1$,
    $c(r)>1$ and $\{v_0,v_1\} \in E_A$, then $e \in S$.
  \end{claim}

  For the triple \(e\) to be compatible with ${\cal H}_A$, \(r\) needs
  to be adjacent to at least one of \(v_{0}\) and \(v_{1}\). Assume,
  w.l.o.g, that \(\left\{v_{1},r\right\} \in E_A\). If \({\cal H}_A\)
  admits a mixed assembly, both \(\left\{v_{0},v_{1}\right\}\) and
  \(\left\{v_{1},r\right\}\) must occur in a path or a
  cycle. Furthermore, since \(c\left(v_{1}\right)=1\), these two
  adjacencies must occur in the same path or cycle, in the order
  \(v_{0}.v_{1}.r\). This is an occurrence of \(e\) as a contiguous
  sequence, which implies that such a triple must occur in every
  assembly of \({\cal H}\), and must be included in $S$.
  
  \smallskip We are now left with the set $E_I'$ of triples
  $e=\left\{v_{0},v_{1},r\right\}$ such that $r$ is a repeat and
  \(\left\{v_{0},v_{1}\right\}\notin E_A\), which means that \(r\) is
  adjacent to both \(v_{0}\) and \(v_{1}\), and we need to find a
  maximum weight subset of triples of this form. To do this, we rely
  on the optimal edge-deletion algorithm designed by Ma{\v{n}}uch et
  al.~\cite{ManuchPWCT2012} for adjacency graphs as shown below.
  
  \begin{enumerate}
  \item Initialize an empty set \(D\) and $E'=E_A$.
  \item For every \(e\in E_I'\):
    \begin{enumerate}[(i)] 
    \item Add an adjacency \(a_{e}=\left\{v_{0},v_{1}\right\}\) to
      \(D\), label $a_e$ with the triple $e$, and set
      \(w_D\left(a_{e}\right) = w\left(e\right)\).
    \item Remove $\{v_0,r\}$ and $\{v_1,r\}$ from $E'$, if present.
    \end{enumerate}
  \item For every remaining adjacency \(e\in E'\), set
    \(w'\left(e\right)=1+\sum_{a_{e}\in D}w_D\left(a_{e}\right)\).
  \item Apply the linearization algorithm (Theorem~\ref{thm:AMEDadjacencies1})
    ~\cite{ManuchPWCT2012} on \((H_D=(V,E' \cup D),w' \cup w_D,c,o_A)\).
  \item Add the triples corresponding to the labels of the adjacencies
    from $D$ retained by the linearization algorithm to \(S\).
  \end{enumerate}
  
  \noi{\em Algorithm correctness.} Given a triple
  \(e=\left\{v_{0},v_{1},r\right\}\) with a repeat vertex \(r\) and no
  adjacency \(\left\{v_{0},v_{1}\right\} \in E_A\), we consider a
  candidate mixed assembly of \({\cal H}\) containing the elements of
  \(e\) contiguously. In such an assembly, we would encounter the
  consecutive substring \(v_{0}.r.v_{1}\).  We can contract this
  substring and label the newly formed adjacency
  \(\left\{v_{0},v_{1}\right\}\), signifying that there is a path of
  length \(2\) between \(v_{0}\) and \(v_{1}\) which passes through
  \(r\) and contains no other vertices, \ie, it encodes the triple
  \(e\). So, we construct the new assembly hypergraph (an adjacency
  graph) by deleting the adjacencies \(\left\{v_{0},r\right\}\) and
  \(\left\{v_{1},r\right\}\) and encoding the path containing $e$ into
  the adjacency \(\left\{v_{0},v_{1}\right\}\) added to $D$.
  
  The optimal edge-deletion algorithm from \cite{ManuchPWCT2012}
  computes a maximum weight set of adjacencies \(S'\subseteq D\) such
  that the assembly graph \(\left(H_{opt}=\left(V, E'\cup
  S'\right),\right.$ $\left.w_{opt}',c,o_{opt}'\right)\) has a mixed 
  assembly, where
  \(w_{opt}'\) and \(o_{opt}'\) are the restrictions of \(w'\) and
  \(o'\) to \(E'\cup S'\).  In this assembly, we can replace every
  \(a_{e}\in S'\) by the corresponding triple \(e\) and the two
  corresponding adjacencies from $E_A$. Note that none of the
  adjacencies from $E_A$ are discarded during linearization since they
  are weighted so that discarding any one would be suboptimal when
  compared to discarding the entire set of adjacencies from $D$. So
  the assembly obtained by this process will contain all the edges
  from \(E_A\), as well as a maximum weight set \(S\subseteq E_I\)
  such that every \(e\in S\) is present. This implies that we computed
  a maximum weight compatible set of triples from $E_I'$.
    
  \noi{\em Algorithm complexity.} Checking the compatibility of a
  triple \(e\) with ${\cal H}_A$ can be done in constant time, since
  we just need a \(3\)-step graph search from any vertex \(v\in e\),
  and proceed until we find a path connecting all \(3\) vertices in
  \(e\).  We can also check the number of repeats in \(e\) in constant
  time. To deal with triples from the set $E_I'$, the new assembly
  hypergraph can obviously be constructed in $O(n+m)$ time and space,
  and contains $n$ vertices and $O(m)$ edges. So the optimal
  edge-deletion algorithm is the main component of the process, and is
  based on a maximum weight matching algorithm of time complexity
  $O((n+m)^{3/2})$ \cite{ManuchPWCT2012}.  \qed
\end{proof}

Related to this theorem, we have the following corollary.
\begin{corollary}\label{thm:maxcompatibility2}
  Let \({\cal H}=\left(H=\left(V,E\right),w,c,o\right)\) be an
  assembly hypergraph such that ${\cal H}_A$ admits a mixed genome
  assembly, maximal repeat clusters are all of size $1$, and each
  interval is an unordered compatible triple. The Assembly Maximum 
  Interval Compatibility Problem in the mixed genome model can be solved 
  for \({\cal H}\) in linear space and $O((n+m)^{3/2})$ time.
\end{corollary}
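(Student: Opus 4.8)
The plan is to obtain this corollary as a direct consequence of Theorem~\ref{thm:maxcompatibility1}, so that the whole argument reduces to checking that the hypotheses of that theorem are satisfied under the (superficially different) hypotheses of the corollary. The only discrepancy is that Theorem~\ref{thm:maxcompatibility1} asks each triple to contain \emph{at most one} repeat, whereas the corollary instead assumes that every maximal repeat cluster has size~$1$; so the first thing I would do is show that the latter condition implies the former.

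For this I would argue by contradiction. Suppose some interval $e$ contained two distinct repeats $r_1,r_2\in V_R$. Then $\{r_1,r_2\}\subseteq e\cap V_R$, so by Definition~\ref{def:repeatcluster} the pair $r_1,r_2$ is joined by the edge $e\cap V_R$ in the hypergraph on vertex set $V_R$ that defines repeat clusters; hence $r_1$ and $r_2$ would lie in a common maximal repeat cluster of size at least~$2$, contradicting the hypothesis. Consequently every interval of ${\cal H}$ is a triple containing at most one repeat. Combined with the remaining assumptions of the corollary --- ${\cal H}_A$ admits a mixed assembly and each interval is a triple compatible with ${\cal H}_A$ --- this shows that ${\cal H}$ satisfies exactly the conditions required by Theorem~\ref{thm:maxcompatibility1}.

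It then suffices to invoke Theorem~\ref{thm:maxcompatibility1} on ${\cal H}$, which solves the Assembly Maximum Interval Compatibility Problem in the mixed genome model in linear space and $O((n+m)^{3/2})$ time --- precisely the bound claimed. There is no genuine obstacle here; the only point that needs a short justification is the combinatorial observation above, namely that singleton maximal repeat clusters prevent two repeats from occurring together in any edge, and in particular in any triple, which is what lets us fall back on the already-proved theorem.
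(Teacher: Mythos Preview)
Your reduction is formally valid under the paper's own Definition~\ref{def:repeatcluster}: since repeat clusters are connected components of the hypergraph on $V_R$ with edges $\{e\cap V_R : e\in E\}$, any triple containing two repeats would already force a cluster of size at least~$2$, so under the corollary's hypothesis every triple has at most one repeat and Theorem~\ref{thm:maxcompatibility1} applies verbatim.

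The paper, however, argues differently. It does \emph{not} reduce to Theorem~\ref{thm:maxcompatibility1} by excluding two-repeat triples; instead it explicitly treats a triple $e=\{v_0,r_0,r_1\}$ with two repeats and one non-repeat, uses the cluster-size-$1$ hypothesis only to conclude that $\{r_0,r_1\}\notin E_A$, deduces that compatibility of $e$ with ${\cal H}_A$ forces the walk $r_0.v_0.r_1$, and hence that $e$ appears in every mixed assembly of ${\cal H}_A$ and can always be put in $S$. Then Theorem~\ref{thm:maxcompatibility1} handles the remaining (at-most-one-repeat) triples.

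The discrepancy reveals a tension in the paper: by the letter of Definition~\ref{def:repeatcluster} the two-repeat case the authors analyse is vacuous, and your shortcut is the cleaner proof. But the authors evidently intend repeat clusters to be induced by adjacencies (or at least not by the intervals themselves), in which case two-repeat triples \emph{can} occur and your reduction would no longer go through. Your argument is correct as stated; the paper's argument is what is actually needed if one reads ``maximal repeat cluster'' in the weaker sense the authors seem to have in mind.
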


\begin{proof}
  We already know that we can find a maximal weight compatible subset
  \(S\subseteq E_I\) if there is no \(e\in E_I\) containing more than
  \(1\) repeat. 

  We now show that for the current problem, a triple
  \(e=\left\{v_{0},r_{0},r_{1}\right\}\), where \(r_{0}\) and
  \(r_{1}\) are repeats, and \(c\left(v_{0}\right)=1\), can also be
  included in the set \(S\) if it is compatible with ${\cal H}_A$.

  Note that \(r_{0}\) and \(r_{1}\) cannot have an adjacency between
  them, since the size of a maximal cluster cannot exceed \(1\). So,
  for \(e\) to be compatible, the corresponding adjacencies will be
  \(\left\{r_{0},v_{0}\right\}\) and
  \(\left\{r_{1},v_{0}\right\}\). For \({\cal H}_A\) to have a
  mixed assembly which contains both adjacencies, the assembly must
  contain \(e\) in the order \(r_{0}.v_{0}.r_{1}\). This is a
  contiguous appearance of the elements of \(e\), and it must occur
  in every mixed assembly. It can thus be included in $S$. 
  Theorem~\ref{thm:maxcompatibility1} concludes the proof.  \qed
\end{proof}

\subsection{A decision algorithm for ordered repeat spanning intervals}


\begin{definition}\label{def:repeatinterval}
  Let $(H=(V,E),w,c,o)$ be an assembly hypergraph. An interval $e\in
  E_{I}$ is an {\em ordered repeat spanning interval} for a maximal
  repeat cluster $R$ if $e=\{u,v,r_1,\dots,r_k\}$ with $c(u)=c(v)=1$,
  $\{r_1,\dots,r_k\} \subseteq R$ and $o(e)=u.s.v$, where $s$ is a
  sequence on the set $\left\{r_{1},\ldots,r_{k}\right\}$, containing
  every element at least once. The subset of ordered repeat spanning
  intervals in $E_{I}$ is denoted by $E_{rs}$
\end{definition}

%

\begin{theorem}~\label{thm:repeat_spanning_compatibility} Let \({\cal
    H}=\left(H=\left(V,E\right),w,c,o\right)\) be an assembly
  hypergraph such that every repeat $r\in V_{R}$ is either contained
  in an adjacency, or it is contained in an interval $e\in E_{I}$ of
  one of the following forms.
  \begin{enumerate}
  \item $e$ is an ordered repeat spanning interval.
  \item $r$ is the only repeat in $e$, $e'=e\setminus\left\{r\right\}\in E$, 
    and $o\left(e\right)=o\left(e'\right)=\lambda$.
  \end{enumerate}
  The Assembly Decision Problem in the linear genome model can
  be solved for \({\cal H}\) in polynomial time and space.
\end{theorem}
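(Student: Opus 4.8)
The plan is to reduce this problem to Theorem~\ref{thm:ADadjacencies2} by preprocessing the ordered repeat spanning intervals so that each maximal repeat cluster is effectively resolved into fixed adjacencies, after which the only repeat-containing edges left are adjacencies and intervals of the type~(2) handled by that theorem. First I would observe that the intervals of type~(2) are exactly the intervals allowed in Theorem~\ref{thm:ADadjacencies2} (an interval containing a single repeat $r$ for which $e\setminus\{r\}$ is also an edge), so if there were no intervals of type~(1) we would be done immediately. The work is therefore entirely about the ordered repeat spanning intervals in $E_{rs}$.

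The key structural step is that an ordered repeat spanning interval $e$ with $o(e)=u.s.v$ pins down, up to mirror, the entire order of the occurrences of the cluster vertices $r_1,\dots,r_k$ that lie between $u$ and $v$. Since $c(u)=c(v)=1$, in any assembly the vertices $u$ and $v$ each appear exactly once, and the substring between them must read $s$ or its mirror; this is a single contiguous stretch. I would turn each such interval into a gadget: contract the whole ordered pattern $u.s.v$ into a single new vertex $x_e$ with $c(x_e)=1$, record the internal sequence $s$ as a decoration, and replace every edge $e''$ of $H$ that met any of $\{u,v,r_1,\dots,r_k\}$ with an edge on the contracted vertex set. One must be careful here about the repeat vertices: a repeat $r_i$ may occur elsewhere in the genome (that is what $c(r_i)>1$ allows), so the contraction must only absorb the occurrence forced inside $e$, leaving residual copies $r_i$ with multiplicity $c(r_i)-(\text{number of times } r_i \text{ occurs in } s)$ available for other edges. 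Two ordered repeat spanning intervals for the same cluster that share an endpoint or overlap in their forced internal sequences must be merged consistently first, or declared incompatible; handling this case analysis — deciding when a family of $E_{rs}$ intervals on one cluster can be glued into one internally-ordered super-interval, and when it cannot — is where the real bookkeeping lies.

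After this contraction, each maximal repeat cluster's "spanning" obligations have been collapsed into fixed vertices of multiplicity one, and the residual repeat vertices participate only in adjacencies and in intervals of type~(2). I would then argue that the resulting assembly hypergraph $\cH'$ satisfies the hypotheses of Theorem~\ref{thm:ADadjacencies2}: every edge containing a (residual) repeat is either an adjacency or a single-repeat interval $e$ with $e\setminus\{r\}\in E'$. Running that polynomial algorithm on $\cH'$ decides whether $\cH'$ has a linear assembly, and I would prove a correspondence lemma: $\cH$ admits a linear assembly if and only if $\cH'$ does, by showing any assembly of $\cH'$ can be expanded (replace each $x_e$ by the string $u.s.v$, in either orientation) into an assembly of $\cH$ compatible with every original edge, and conversely any assembly of $\cH$ restricted/contracted yields one of $\cH'$. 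Since the contraction is polynomial in $n+m+s$ and Theorem~\ref{thm:ADadjacencies2} is polynomial, the whole procedure is polynomial.

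The main obstacle I anticipate is the interaction between distinct ordered repeat spanning intervals on the same repeat cluster, and between such an interval and an ordinary adjacency that uses one of its repeat vertices in a \emph{different} occurrence: the contraction must simultaneously fix one occurrence of each $r_i$ (with its neighbours forced by $s$) while still leaving the other occurrences free, and the merging rule for overlapping forced sequences must be proven to be both sound and complete. Once that case analysis is nailed down, the reduction to Theorem~\ref{thm:ADadjacencies2} is routine.
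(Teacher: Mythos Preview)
Your high-level plan --- resolve the ordered repeat spanning intervals first and then invoke Theorem~\ref{thm:ADadjacencies2} --- is exactly the paper's strategy. The gap is in your resolution mechanism: contracting the whole pattern $u.s.v$ into a single vertex $x_e$ destroys information that you need for the correspondence lemma.

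Concretely, suppose $e\in E_{rs}$ has $o(e)=u.r.v$, and suppose there are two further adjacencies $\{a,u\}$ and $\{d,u\}$ in $E_A$ with $a,d$ non-repeats distinct from $v$. Then $\cH$ has no linear assembly: the single occurrence of $u$ would need neighbours $a$, $d$, and (by $o(e)$) some copy of $r$, which is three neighbours for a multiplicity-one vertex. Under your contraction, however, both adjacencies become $\{a,x_e\}$ and $\{d,x_e\}$, and $a.x_e.d$ is a perfectly good linear assembly of $\cH'$. Neither expansion $a.u.r.v.d$ nor $a.v.r.u.d$ realises both original adjacencies, so the ``expand in either orientation'' step fails. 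The same phenomenon breaks your treatment of type-(2) intervals and of repeat-free intervals that contain $u$ or $v$: after contraction, the hypergraph no longer knows which end of the block is $u$ and which is $v$.

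The paper sidesteps this entirely by \emph{not} touching $u$ and $v$. It replaces only the repeat portion $s=r_1\cdots r_k$ by a chain of fresh multiplicity-one vertices $t_1,\dots,t_k$, adds the adjacencies $\{u,t_1\},\{t_1,t_2\},\dots,\{t_k,v\}$, decrements each $c(r_i)$ by one per occurrence in $s$, and deletes the now-redundant boundary adjacencies $\{u,r_1\},\{r_k,v\}$ if present. Because $u$ and $v$ remain ordinary vertices, every other edge through them is unchanged, and distinct repeat spanning intervals --- even ones sharing an endpoint or a repeat --- are processed completely independently with no merging step. The ``main obstacle'' you anticipate simply does not arise. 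After this rewriting, the only intervals left containing repeats are those of type~(2), and Theorem~\ref{thm:ADadjacencies2} applies directly; the equivalence $\cH\leftrightarrow\cH'$ is then a straightforward relabelling argument (each repeat $r$ corresponds to $r$ together with all the fresh $t$-copies standing in for it), with the side condition that all residual multiplicities $c'(r)$ stay nonnegative.
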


\begin{proof}
  The basic idea of the proof is to realize the sequence $o(e)$ for 
  every repeat spanning interval $e\in E_{rs}$ by creating unique 
  copies of the repeats in $e$ and decreasing the multiplicity 
  accordingly. This leads to an assembly graph that can then be 
  checked using Theorem \ref{thm:ADadjacencies2}. Formally 
  we define an extended assembly hypergraph, ${\cal
    H}'=(H'=(V',E'),c',o')$, as follows (we omit $w$ from the 
    notation, since we are addressing a decision problem).
  \begin{enumerate}
  \item $V'=V$, $E'=E\backslash E_{rs}$, $c'=c$, $o'=o_A$, $D=\emptyset$.
  \item For every repeat spanning interval $e \in E_{rs}$.
    \begin{enumerate}
    \item Let $o(e)=o=u.r_1.\dots.r_k.v$, possibly $r_{i}=r_{j}$ for
      $i\neq j$ (the $r_i$ are repeats).

    \item For $i$ from $1$ to $k$ 
      \begin{enumerate}
      \item add a unique vertex $t_i$ to $V'$, with multiplicity
        $c'(t_i)=1$,
      \item add an adjacency $\{t_{i-1},t_{i}\}$ to $E'$ for $1<i\leq k$,
      \item decrease $c'(r_i)$ by 1.
      \end{enumerate}
    \item Add edges $\{u,t_{1}\}$ and $\{v,t_{k}\}$ to $E'$.
    \item If the adjacencies $\{u,r_1\}$ and $\{r_k,v\}$ are present, 
        add them to $D$.
    \end{enumerate}
    \item Check if the assembly hypergraph, ${\cal
      H}'=(H'=(V',E'\setminus D),c',o')$ admits a linear genome assembly using
    Theorem \ref{thm:ADadjacencies2}.
  \end{enumerate}

  \begin{claim}
    ${\cal H}$ admits a valid genome assembly in the linear genome
    model if and only if $c'(r)\geq 0$ for every repeat $r\in V$ and
    ${\cal H}'$ admits one.
  \end{claim}

  Assume \({\cal H}'\) admits an assembly ${\cal A}'$. By
  construction, every repeat $r$ of $V_R$ maps to a subset of $V'$
  composed of $r$ and the vertices added when reading occurrences of
  $r$ in the ordered repeat spanning intervals of $E_I$. For a repeat
  $r \in V_R$, let $\phi(r) \subseteq V'$ be this subset of $V'$ and
  $\phi^{-1}$ the inverse map. By construction, the adjacencies added
  to $E'$ when reading the order $o(e)$ of an interval $e$, when the
  inverse map is applied $\phi^{-1}$ to their vertices, define a walk
  in ${\cal H}$ corresponding exactly to $o(e)$, which allows us to
  unambiguously translate the set of linear walks on $H'$ defining
  ${\cal A}'$ into a set of linear walks ${\cal A}$ on $H$. This
  implies that every edge of $E$ is compatible with ${\cal A}$ (as
  defined in Def. \ref{def:compatibility}), and we only need to
  consider potential problems caused by multiplicities. Assume that
  for every repeat $r\in V$ one has $c'(r)\geq 0$ and that for every
  $v'\in V'$, $v'$ appears at most $c'(v')$ times in an assembly of
  \({\cal H}'\), \ie, exactly \(1\) time, since $c'\left(v'\right)=1$
  for all \(v'\in V'\). For a vertex $v\in V$ such that $c(v)=1$, by
  construction $c'(v)=c(v)$, so an assembly of ${\cal H}'$ also
  satisfies the constraints of an assembly of ${\cal H}$ for $v$. For
  a repeat $r\in V$, the number of occurrences of elements of
  $\phi(r)$ in ${\cal A}'$ is at most
  $c'(r)+|\phi(r)\setminus\{r\}|$. By construction,
  $c(r)=c'(r)+|\phi(r)\setminus\{r\}|$, so assuming that $c'(r)\geq 0$
  implies that the constraint on $c(r)$ is satisfied in the
  linear walks on ${\cal A}$.

  Now, consider ${\cal H}$ admits an assembly ${\cal A}$ in the linear
  genome model. By definition, for every repeat spanning interval $e$,
  $o(e)$ appears as a walk in ${\cal A}$. By replacing the repeats in
  such a walk by new vertices with multiplicity $1$ as done in step
  2.b of the algorithm above, one clearly obtains an assembly ${\cal
    A}'$ for ${\cal H}'$, and the identity
  $c(r)=c'(r)+|\phi(r)\setminus\{r\}|$ ensures that $c'(r)\geq 0$.


  \noi{\em Complexity.} The polynomial time and space complexity
  follows from Theorem~\ref{thm:ADadjacencies2}, since the
  the construction of ${\cal H}'$ results in an assembly hypergraph
  with the structure in which no two repeats are contained in an
  interval (the repeat spanning intervals being resolved), and if an
  interval \(e\in E'\) contains a repeat \(r\), there exists an edge
  $e\setminus\left\{r\right\}$ in $E'$, since we added them directly
  from ${\cal H}$. \qed
\end{proof}

The following corollary follows easily from the previous theorem.
\begin{corollary}\label{cor:repeatspanning}
  Let \({\cal H}=\left(H=\left(V,E\right),w,c,o\right)\) be an
  assembly hypergraph such that each interval is an ordered repeat
  spanning interval. The Assembly Decision Problem in the mixed and
  linear genome models can be solved for \({\cal H}\) in
  $O(n+m+e+\sum_{e\in E_I}|o(e)|)$ time and space.
\end{corollary}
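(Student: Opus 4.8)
The plan is to reuse, essentially unchanged, the reduction from the proof of Theorem~\ref{thm:repeat_spanning_compatibility}, and to exploit the fact that under the present, stronger hypothesis this reduction produces an \emph{adjacency graph}, for which the Assembly Decision Problem is solvable in both genome models in linear time by Theorem~\ref{thm:ADadjacencies1}(1). Concretely: we may assume every repeat occurs in some edge (an isolated repeat is irrelevant to the decision problem and can be dropped), so the hypothesis of Theorem~\ref{thm:repeat_spanning_compatibility} holds with every interval falling in its case~1; we then build ${\cal H}'=(H'=(V',E'\setminus D),c',o')$ exactly as in that proof, reject immediately if some $c'(r)$ has become negative, and otherwise run the algorithm of Theorem~\ref{thm:ADadjacencies1}(1) on ${\cal H}'$ in the requested (linear or mixed) model.

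The key observation is that ${\cal H}'$ is an adjacency graph. Since every interval of ${\cal H}$ is an ordered repeat spanning interval, $E_I=E_{rs}$, so the construction starts from $E'=E\setminus E_{rs}=E_A$ and every edge it subsequently adds --- the chain adjacencies $\{t_{i-1},t_i\}$ together with $\{u,t_1\}$ and $\{v,t_k\}$ --- has size $2$; hence $\Delta({\cal H}')\le 2$. It then remains to see that the correctness claim of Theorem~\ref{thm:repeat_spanning_compatibility}, namely that ${\cal H}$ admits an assembly if and only if $c'(r)\ge0$ for every repeat $r$ and ${\cal H}'$ admits one, is valid in the mixed model and not just the linear one. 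Its proof sets up a content-preserving correspondence between the walks of an assembly of ${\cal H}$ and those of ${\cal H}'$ by contracting/expanding the $t_i$-chains via the maps $\phi$ and $\phi^{-1}$; this operation is purely local and makes no use of a walk being open rather than closed, so it carries each linear or circular walk over in both directions while preserving compatibility of every edge (in particular the ordered interval $o(e)=u.s.v$, realized as the contiguous subwalk $u.t_1.\dots.t_k.v$, which is moreover forced since each $t_i$ has exactly two incident edges in $H'$) and, through the identity $c(r)=c'(r)+|\phi(r)\setminus\{r\}|$, the multiplicity bounds. The one point I would check carefully is the removal of $D$: an adjacency $\{u,r_1\}$ or $\{r_k,v\}$ subsumed by a repeat spanning interval is already covered in ${\cal H}$ by the realization of $o(e)$, so discarding it from $E'$ loses no coverage, whereas keeping a separate copy of it would spuriously raise the degree of the multiplicity-$1$ vertices $u$ or $v$; this bookkeeping is identical for circular walks.

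For the complexity, each ordered repeat spanning interval $e$ introduces at most $|o(e)|-2$ new vertices and at most $|o(e)|-1$ new adjacencies, so ${\cal H}'$ has $O(n+\sum_{e\in E_I}|o(e)|)$ vertices and $O(m+\sum_{e\in E_I}|o(e)|)$ edges, and since it is an adjacency graph its incidence sum is at most twice its number of edges; constructing ${\cal H}'$ and then invoking Theorem~\ref{thm:ADadjacencies1}(1) therefore runs in $O(n+m+\sum_{e\in E_I}|o(e)|)$ time and space, which is the claimed bound. I expect the only real obstacle to be confirming that no step of the original linear-model argument covertly relied on the absence of cycles; my belief is that it does not, and that the entire content of the corollary is the replacement, at the last step, of Theorem~\ref{thm:ADadjacencies2} by Theorem~\ref{thm:ADadjacencies1}(1), which simultaneously yields the mixed genome model and the linear-time bound.
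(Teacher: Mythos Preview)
Your proposal is correct and follows essentially the same approach as the paper: reuse the construction from Theorem~\ref{thm:repeat_spanning_compatibility}, observe that since $E_I=E_{rs}$ the resulting ${\cal H}'$ is an adjacency graph, and then invoke Theorem~\ref{thm:ADadjacencies1} in place of Theorem~\ref{thm:ADadjacencies2} to obtain both genome models and the linear-time bound. Your treatment is in fact more careful than the paper's, which does not explicitly justify why the correctness claim of Theorem~\ref{thm:repeat_spanning_compatibility} (stated only for the linear model) carries over to the mixed model.
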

\begin{proof}
  We make the same construction as in
  Theorem~\ref{thm:repeat_spanning_compatibility}.  The extended
  assembly graph ${\cal H'}$ we create now is composed entirely of
  adjacencies, since $E_{I}=E_{rs}$. An application of
  Theorem~\ref{thm:ADadjacencies1} completes the proof. The time and
  space complexities follow immediately from the linear time and space
  complexities stated in Theorem~\ref{thm:ADadjacencies1} and from the
  size of ${\cal H'}$.  \qed
\end{proof}

The results above have interesting practical implications that we
outline now.  First, Corollary \ref{cor:repeatspanning} shows that, 
if provided with ordered repeat spanning intervals, one can check for 
the existence of an assembly in both genome models. Ordered repeat
spanning intervals can be obtained in practice in several ways, such
as mapping the elements of $V$ onto related genomes
\cite{RajaramanTC2013,GnerreLLJ2009} or long reads (see Appendix for
more details).  The tractability of the Assembly Decision Problem,
with linear time and space complexities, makes it possible to combine
it with the tractability result of Theorem \ref{thm:AMEDadjacencies1}
to select a subset of adjacencies, followed by a greedy heuristic for
the Assembly Maximum Interval Compatibility Problem. Note also that
the condition on the unordered intervals in the statement of Theorem
\ref{thm:repeat_spanning_compatibility} allows one to account for the
important notion of {\em telomeres} \cite{ChauveMPW2011}. Regarding
Theorem \ref{thm:maxcompatibility1}, it can be used to partially clear
the ambiguities caused by repeats in assembly hypergraphs where
triples are obtained from mate-pairs of reads from sequencing
libraries defined with inserts of length greater than the length of 
repeats\cite{NagarajanP2009}. If all maximal repeat clusters are
``collapsed'' into a single vertex (with the maximum multiplicity
among all initial repeats of the cluster), such mate-pairs spanning
repeat clusters define the triples. Solving the Assembly Maximum
Interval Compatibility Problem allows us to specify the locations of
the different occurrences of the spanned repeat clusters in the
assembled genome, thus resolving part of the ambiguity due to repeats
and leaving only the internal structure of each repeat cluster
(content and order) unresolved.

\section{Conclusion}\label{sec:conclusion}

In the present work, we presented a set of positive results on some
hypergraph covering problems motivated by genome assembly
questions. To the best of our knowledge, these are the first such
results for handling repeats in assembly problems in an edge-deletion
approach, as previous results focused on superstring approaches
\cite{KececiogluM1995,BatzoglouI1999,MedvedevGMB2007,NagarajanP2009},
and these new methods have been applied on real data
\cite{RajaramanTC2013}. Moreover, the initial results we presented
suggest several open problems.

First, our results about triples assume that they are compatible with
${\cal H}_A$ (\ie, appear as walks in $H_A$); we conjecture that
similar positive results can be obtained when relaxing this condition
(in particular when triple elements might not appear in the same
connected component). Next, our edge-deletion positive results assume
that ${\cal H}_A$ admits a genome assembly, and only intervals are
considered for being deleted. This leads to a two-stage assembly
process where adjacencies are deleted first, followed by intervals. It
remains open to see if both adjacencies and limited families of
intervals can be considered jointly. Also of interest would be to see
if the size of maximal repeat clusters or of intervals can be used as
parameters for FPT results.

Regarding repeat-spanning intervals, it can be asked if one can relax
the total order structure $o$ to account for uncertainty; for example,
if they are defined from the comparison of pairs of related genomes,
it might happen that specific rearrangements lead to conserved genome
segments that can be described by partial orders \cite{ZhengS2005},
which opens the question of solving the Assembly Decision Problem with
partial orders to describe repeat-spanning intervals. Along the same
line, it might happen that intervals spanning only prefixes or
suffixes of repeat occurrences (called {\em repeat-overlapping
  intervals}) can be detected, and the tractability of the Assembly
Decision Problem with such intervals is open; we conjecture it is FPT
in the number of such intervals.

Finally, {\em gaps}, that can be described in terms of binary
matrices, as entries $0$ appearing between entries $1$, appears
naturally in genome scaffolding problems \cite{GaoSN2009}; the notion
of gaps can naturally be described, for graphs, in terms of {\em
  bandwidth} and has been extended to binary
matrices/hypergraphs in \cite{ManuchPC2012}. Very limited tractability
result exist when gaps are allowed, whether it is for graphs
\cite{GaoSN2009} or hypergraphs \cite{ManuchPC2012}, none
considering repeats, which opens a wide range of questions of
practical importance.

\bibliographystyle{plain}
\putbib[iwoca2013]
\end{bibunit}

\vfill\pagebreak
\begin{bibunit}[plain]
\section*{Appendix A.}

In this appendix, we describe how the assembly hypergraph relates to
practical genome assembly problems.

Our initial motivation for investigating the algorithmic problems
described in this paper follows from earlier computational
paleogenomics methods developed to compute genome maps and scaffolds
for ancestral genomes
\cite{Ma2006,ChauveT2008,WittlerS2009,ChauveGOT2010,OuangraouaTC2011,RajaramanTC2013}. In
this problem, the vertex set $V$ represents a set of $n$ ancestral
genomic markers, obtained either through whole genome alignment
\cite{Ma2006,ChauveT2008,OuangraouaTC2011}, the analysis of gene
families \cite{ChauveGOT2010}, or the sequencing of an ancient
genome \cite{RajaramanTC2013}. The function $c$ encodes the
multiplicity, that is an upper bound on the allowed number of copies
of each marker in potential assemblies. For ancestral genomes, it can
be obtained from traditional parsimony methods \cite{Csuros2010}.  An
edge $e=\{v_1,\dots,v_k\}\in E$ encodes the hypothesis that
$v_1,\dots,v_k$ appear {\it contiguously} in an assembly of the
elements of $V$. For ordered intervals, that are edges $e$, such that
$|e|>2$ and $o(e)\neq \lambda$, $o(e)$ encodes a total ordering
information about the genomic elements they contain. In computational
paleogenomics, edges and intervals (including order) can be obtained
from the comparison of pairs of genomes related to the ancient genome
that is being assembled. The function $w$ is a weight that can be seen
as a confidence measure on every edge (the higher, the better), that
can be based on phylogenetic conservation. More generally, the
assembly hypergraph is a natural model for genome mapping problems
\cite{BatzoglouI1999,ZhengS2005}.

However, the assembly hypergraph also allows us to formalize other
assembly problems. For example, in the {\em scaffolding} problem
\cite{HusonRM2002}, $V$ would represent {\em contigs} and $c$ can be
obtained by methods based on the reads depth of coverage
\cite{GaoBN2012,DuanZDW2013}. Co-localization information can be
obtained from mate-pairs libraries with an insert that is short with
respect to the minimum contig length, thus describing adjacencies,
while ordered intervals can be obtained from mapping contigs onto long
reads \cite{Koren2012} or related genome sequences
\cite{RajaramanTC2013,GnerreLLJ2009}.

The assembly hypergraph can also be used to model the problem of
assembling short reads into contigs, although contig assembly is
generally based on Eulerian superstring approaches
\cite{MedvedevGMB2007,NagarajanP2009,KececiogluM1995} instead of edge
deletions approaches. In this problem, the vertices $V$ represent
short sequence elements, such as reads in the overlap graph approach
\cite{Myers1995} or $k$-mers (substrings of length $k$) in the widely
used de Bruijn graph approach
\cite{IduryW1995,ZerbinoB2008}\footnote{For example, the notion of
  maximal repeat cluster is very similar to the notion of connected
  components of the sparse de Bruijn graph that was studied in
  \cite{QS2008}.}.  The function $c$ can here again be obtained from
the reads depth of coverage. Adjacencies follow from overlaps between
elements of $V$, whose statistical significance, combined with the
read quality for example, can be used to define $w$. Intervals can
here again be obtained from mapping short reads on long reads.

Finally, it is important to remember that genomic segments are {\em
  oriented} along a chromosome, due to the double stranded nature of
most genomes. The algorithms we described in the present paper can
handle this problem in a very easy way. Each genomic element is
represented by two vertices, one for each extremity, with an adjacency
linking them (called a {\em required} adjacency, while adjacencies
between extremities of different elements are called {\em inferred}
adjacencies). A compatible assembly then needs to be composed of linear 
 or circular walks where required adjacencies alternate with inferred
adjacencies. This property can be handled naturally by the decision
algorithms (see \cite{WittlerMPS2011}), and also by the optimization
algorithms by weighting each required adjacency by a weight greater
than the cumulative weight of all inferred adjacencies. Also, triples
that overlap repeats need to be replaced by quadruples containing both
extremities of a same initial genomic element, which can be handled by
our algorithms (full details will be given in the complete version of
our work).

\putbib[iwoca2013]
\end{bibunit}

\end{document}